\newtheorem{theorem}{Theorem}
\newtheorem{lemma}{Lemma}
\newtheorem{proposition}{Proposition}
\newtheorem{definition}{Definition}
\newtheorem{remark}{Remark}
\DeclareMathOperator{\mat}{Mat}
\newcommand{\R}{\mathbb{R}}
\newcommand{\E}{\mathbb{E}}
\newcommand{\Z}{\mathbb{Z}}
\newcommand{\info}{\mathrm{I}}
\begin{document}

\title{Information Bounds and Flatness Factor Approximation for Fading Wiretap {MIMO} Channels}

\author{Amaro Barreal$^\ast$, Alex Karrila$^\ast$, David Karpuk and Camilla Hollanti \\
Department of Mathematics and Systems Analysis, School of Science, Aalto University, Finland. \\
email: firstname.lastname@aalto.fi}
\maketitle

{\let\thefootnote\relax\footnotetext{$^\ast$ Equal contribution.}}

\begin{abstract}

In this article, the design of secure lattice coset codes for general wireless channels with fading and Gaussian noise is studied. Recalling the eavesdropper's probability and information bounds, a variant of the latter is given from which it is explicitly seen that both quantities are upper bounded by (increasing functions of) the expected flatness factor of the faded lattice related to the eavesdropper.
    
By making use of a recently developed approximation of the theta series of a lattice, it is further shown how the average flatness factor can be approximated numerically. In particular, based on the numerical computations, the average flatness factor not only bounds but also orders correctly the performance of different lattices.
\end{abstract}

\section{Introduction}
\label{sec:intro}
In the wireless wiretap scheme two legitimate communication parties, Alice and Bob, exchange information in the presence of an \emph{eavesdropper}, Eve. In this setting, the communication parties rely on physical layer security rather than cryptographic protocols. Hence, Eve is assumed to have no computational limitations and know the cryptographic key, if any, but to have a worse signal quality than Bob. 

The objective of code design in a wiretap channel is to maximize the data rate and Bob's correct decoding probability while minimizing Eve's information. It was shown in the seminal paper of Wyner \cite{wyner} that the legitimate parties can design codes with asymptotically non-zero rate, zero error probability and zero information leakage. Today, this setup is particularly interesting in wireless channels that are open in nature but vulnerable to distortions

As a practical construction of a wiretap code, \cite{wyner-ozarow} introduced the general technique of \emph{coset coding}, where random bits are added to the message to confuse the eavesdropper. In the specific case of a wireless channel, where lattice codes are suitable, the code lattice $\Lambda_b$ is endowed with a sublattice $\Lambda_e \subset \Lambda_b$ which carries the random bits \cite{oggier}.

\subsection{Related Work and Contributions}
\label{subsec:related}

The security of lattice coset codes can be quantized either by Eve's correct decision probability, or alternatively by the mutual information of the message and Eve's received signal. For the \emph{additive white Gaussian noise} (AWGN) channel, upper bounds are known for both approaches \cite{oggier, ling} and, more importantly, both are increasing functions of the \emph{flatness factor} of the lattice $\Lambda_e$. Sequences of lattice coset codes achieving security and reliability are also constructed in \cite{ling}.

For fading channel models, probability and information bounds were derived in \cite{belfiore, belfiore_mimo}, and \cite{mirghasemi, luzzi}, respectively. Codes achieving security and reliability in the \emph{multiple-input multiple-output} (MIMO) channel were given in \cite{luzzi}. In this paper, we recall the strategy of \cite{belfiore, belfiore_mimo} and give a variant of the information bounds. With this streamlined introductory computation, we obtain bounds which are increasing functions of the expected flatness factor of the faded lattice related to the eavesdropper. The agreement of the probability and information approaches is hence explicit, and a natural and explicit generalization of the AWGN case. The computations hold in any channel model with linear fading and Gaussian noise. We remark that, to the best of our knowledge, the steps towards practical code designs in low dimensions, such as \cite{gnilke}, are based on the probability bounds.

Motivated by this, we show how to use an approximation of the theta series of a lattice, recently derived in \cite{barreal}, to efficiently compute the average flatness factor of a given lattice. Hence, we do not need to rely on further approximations for the information/probability bounds, \emph{e.g.}, the common approach using the \emph{inverse norm sum} in SISO channels \cite{belfiore}. We exemplify this in the Rayleigh fast fading channel, and see an agreement between our numerical computations and the geometric design heuristics and simulations results in \cite{gnilke}. In particular, this agreement supports the expectation that the flatness factor not only \emph{bounds} but also \emph{orders correctly} the performance of different lattices, hence serving as a design criterion for practical low-dimensional constructions, as first suggested in \cite{belfiore, belfiore_mimo}. Comparing to \cite{gnilke}, where average flatness factor heuristics are tested with simulations, our results suggest that it is indeed approachable for design heuristics as well as numerical computations.

\section{Lattices, Theta Series and the Flatness Factor}
\label{sec:lattices}
A \emph{lattice} $\Lambda \subset \R^n$ is a discrete subgroup of $\R^n$ with the property that there exist $s \le n$ linearly independent vectors $\left(\mathbf{b}_1,\ldots,\mathbf{b}_s\right)$ of $\R^n$ such that 
    \begin{align*}
        \Lambda = \bigoplus\limits_{i=1}^{s}{\mathbf{b}_i \Z}.
    \end{align*} 
We say that $\left(\mathbf{b}_1,\ldots,\mathbf{b}_s\right)$ is a $\Z$-basis of $\Lambda$, and call $s \le n$ the \emph{rank}, and $n$ the \emph{dimension} of $\Lambda$. The lattice is \emph{full} if $s = n$.

A lattice $\Lambda' \subset \R^n$ such that $\Lambda' \subset \Lambda$ is called a \emph{sublattice} of $\Lambda$, and $\Lambda$ is referred to as a \emph{superlattice} for $\Lambda'$.  

For a convenient presentation, we define a \emph{generator matrix} $M_{\Lambda} := \left[\mathbf{b}_1\  \cdots\ \mathbf{b}_s\right] \in \mat(n\times s,\R)$, and equivalently write
\begin{align*}
    \Lambda = \left\{\left.\lambda = M_\Lambda \mathbf{z} \right| \mathbf{z} \in \Z^s \right\}.
\end{align*}

\begin{definition}
    Let $\Lambda \subset \R^n$ be a full lattice, and let $\lambda_{\min} := \min_{\lambda \in \Lambda\backslash\left\{\mathbf{0} \right\}}||\lambda||^2$ be its \emph{minimal norm}. The lattice $\Lambda$ is called \emph{well-rounded} if the set $\left\{\left. \lambda \in \Lambda\ \right| ||\lambda||^2 = \lambda_{\min} \right\}$ contains $n$ linearly independent vectors. 
\end{definition}

The \emph{volume} of $\Lambda$ is defined to be $\nu_{\Lambda} = \left|\det(M_{\Lambda})\right|$, and is independent of the choice of basis. If $\Lambda$ is not full, then $\nu_{\Lambda} = \det(M_{\Lambda}^t M_{\Lambda})^{1/2}$. The dual lattice $\Lambda^\ast$ of a full lattice $\Lambda$ is the lattice generated by $M_{\Lambda^\ast} := (M_\Lambda^{-1})^t = (M_\Lambda^{t})^{-1}$. We can easily compute the volume of a sublattice\footnote{The index $|\Lambda/\Lambda'|$ is finite provided that $\dim(\Lambda) = \dim(\Lambda')$.} $\Lambda' \subset \Lambda$ and of the dual lattice $\Lambda^\ast$, as
\begin{align*}
    \nu_{\Lambda'} = \nu_{\Lambda}\left|\Lambda/\Lambda'\right|;\quad
    \nu_{\Lambda^\ast} = 1/\nu_{\Lambda},
\end{align*}
where $|\Lambda/\Lambda'|$ is the group \emph{index} of $\Lambda'$ in $\Lambda$. Further, the \emph{Voronoi cell} associated with a lattice point $\lambda \in \Lambda$ is the set 
\begin{align*}
    \mathcal{V}_{\Lambda}(\lambda) := \left\{\left. \mathbf{x} \in \R^n \right| ||\mathbf{x}-\lambda||^2 \le ||\mathbf{x} - \lambda'||^2, \lambda' \in \Lambda \backslash \left\{\lambda\right\} \right\},
    \end{align*}
and $\mathcal{V}(\Lambda) := \mathcal{V}_{\Lambda}(\mathbf{0})$ denotes the \emph{basic Voronoi cell} of $\Lambda$. 
    
\begin{definition}
\label{def:theta}
Let $\Lambda$ be a lattice. The \emph{theta series} (or theta function) of $\Lambda$ is the generating function
\begin{align*}
    \Theta_{\Lambda}(q) := \sum\limits_{\lambda\in \Lambda}{q^{\left|\left| \lambda\right|\right|^2}}.
\end{align*}
\end{definition}
The theta series $\Theta_\Lambda(q)$ converges absolutely if $0 \le q < 1$. In this article, we will need to compute the theta series of lattices which have been affected by random fading, \emph{i.e.}, with \emph{random} generator matrices. Unfortunately, there is no known way of computing the theta series of a random lattice in closed form. The following result is thus crucial for our purposes.

\begin{proposition}\cite[Prop.~1]{barreal}
\label{prop:theta_approx}
    Let $\Lambda \subset \R^n$ be a full lattice with fundamental volume $\nu$ and minimal norm $\lambda_{\min}$. The theta series $\Theta_{\Lambda}\left(e^{-\pi\tau}\right)$, as a function of $\tau$, can be expressed as 
    \begin{align*}
        \Theta_{\Lambda}\left(e^{-\pi\tau}\right) = 1 + \frac{(\pi\lambda_{\min})^{\frac{n}{2}+1}\tau}{\Gamma\left(\frac{n}{2}+1\right)\nu}\int\limits_{1}^{\infty}{t^{\frac{n}{2}}e^{-\pi\tau\lambda_{\min} t} dt} + \Xi_n,
    \end{align*}
    where $\Xi_n = \Xi_n(\tau,\Lambda,L)$ denotes the error term. 
\end{proposition}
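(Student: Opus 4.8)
The plan is to realize the theta series as a Riemann--Stieltjes integral against the lattice point counting function, integrate by parts, and then replace that discrete counting function by its smooth volume approximation, collecting the discrepancy into $\Xi_n$. Concretely, I would start by writing
\[
\Theta_\Lambda\!\left(e^{-\pi\tau}\right) = \sum_{\lambda \in \Lambda} e^{-\pi\tau\|\lambda\|^2} = \int_{0^-}^{\infty} e^{-\pi\tau r}\, dN(r),
\]
where $N(r) := \#\{\lambda \in \Lambda : \|\lambda\|^2 \le r\}$ counts the lattice points inside the closed ball of squared radius $r$. Since $N$ is a nondecreasing step function vanishing for $r < 0$, while $e^{-\pi\tau r}N(r) \to 0$ as $r \to \infty$, both boundary contributions vanish and integration by parts gives
\[
\Theta_\Lambda\!\left(e^{-\pi\tau}\right) = \pi\tau \int_{0}^{\infty} N(r)\, e^{-\pi\tau r}\, dr.
\]
The next move is to isolate the origin by writing $N(r) = 1 + N_+(r)$, where $N_+$ counts only the nonzero points and hence vanishes on $[0,\lambda_{\min})$. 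The constant contributes $\pi\tau \int_0^\infty e^{-\pi\tau r}\,dr = 1$, producing the leading $1$ in the statement and leaving $\Theta_\Lambda(e^{-\pi\tau}) = 1 + \pi\tau \int_{\lambda_{\min}}^{\infty} N_+(r)\, e^{-\pi\tau r}\, dr$.

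The second step is the approximation proper. By the standard lattice point heuristic, $N(r)$ is tracked by the normalized ball volume $V(r) := \pi^{n/2} r^{n/2}/(\Gamma(\tfrac{n}{2}+1)\,\nu)$, using that the Euclidean ball of squared radius $r$ has volume $\pi^{n/2} r^{n/2}/\Gamma(\tfrac{n}{2}+1)$ and that $\Lambda$ has covolume $\nu$. Substituting $V(r)$ for $N_+(r)$, keeping the range $[\lambda_{\min},\infty)$, and changing variables $r = \lambda_{\min} t$ yields
\[
\pi\tau \int_{\lambda_{\min}}^{\infty} \frac{\pi^{n/2} r^{n/2}}{\Gamma(\tfrac{n}{2}+1)\,\nu}\, e^{-\pi\tau r}\, dr = \frac{(\pi\lambda_{\min})^{\frac{n}{2}+1}\tau}{\Gamma\!\left(\frac{n}{2}+1\right)\nu} \int_{1}^{\infty} t^{n/2}\, e^{-\pi\tau\lambda_{\min} t}\, dt,
\]
which is exactly the claimed main term. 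The error term is then \emph{defined} as the remainder $\Xi_n := \pi\tau \int_{\lambda_{\min}}^{\infty} \bigl(N_+(r) - V(r)\bigr)\, e^{-\pi\tau r}\, dr$, so that the identity in the statement holds by construction.

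The main obstacle is quantifying $\Xi_n$, that is, controlling how poorly the smooth volume $V(r)$ approximates the genuine count $N_+(r)$. The estimate $N(r) \approx V(r)$ is accurate for large $r$ but crude for small $r$, precisely where the discreteness of the first few shells dominates and where the weight $e^{-\pi\tau r}$ is largest. I expect this to be managed through the auxiliary cutoff $L$ appearing in $\Xi_n(\tau,\Lambda,L)$: one sums the shells with $\|\lambda\|^2 \le L$ exactly and applies the volume heuristic only to the tail $r > L$, so that $\Xi_n$ splits into a finite discrepancy between the exact low-norm count and $V$ on $[\lambda_{\min},L]$ plus a tail term bounded by the regularity of $N - V$. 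Making both contributions explicitly small, and checking that the resulting expression remains numerically tractable for the random (faded) generator matrices of interest, is the delicate part; by contrast, the algebraic identity producing the main term is routine once the Stieltjes and integration-by-parts setup is fixed.
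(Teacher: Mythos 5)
The paper never proves this proposition itself---it is imported verbatim from \cite{barreal}, to which the reader is referred for the detailed version---but your derivation is essentially the argument used there: express the theta series through the lattice-point counting function (Abel summation in the source, equivalently your Stieltjes integration by parts), replace the count by the normalized ball volume $\pi^{n/2}r^{n/2}/(\Gamma(\tfrac{n}{2}+1)\nu)$, and absorb the discrepancy, controlled via the shell cutoff $L$, into $\Xi_n$. Your main-term algebra is correct, and since the statement leaves $\Xi_n$ unspecified, defining it as the remainder does make the identity hold by construction; the substantive content---quantifying $\Xi_n$---resides in the cited reference, exactly as you indicate.
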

We refer to \cite{barreal} for a more detailed version of this result. In Figure~\ref{fig:theta_approx} we illustrate the accuracy of this approximation for various famous lattices for which the theta series is known in closed form and thus can be computed explicitly.
\begin{figure}[!h]
    \includegraphics[width=0.43\textwidth]{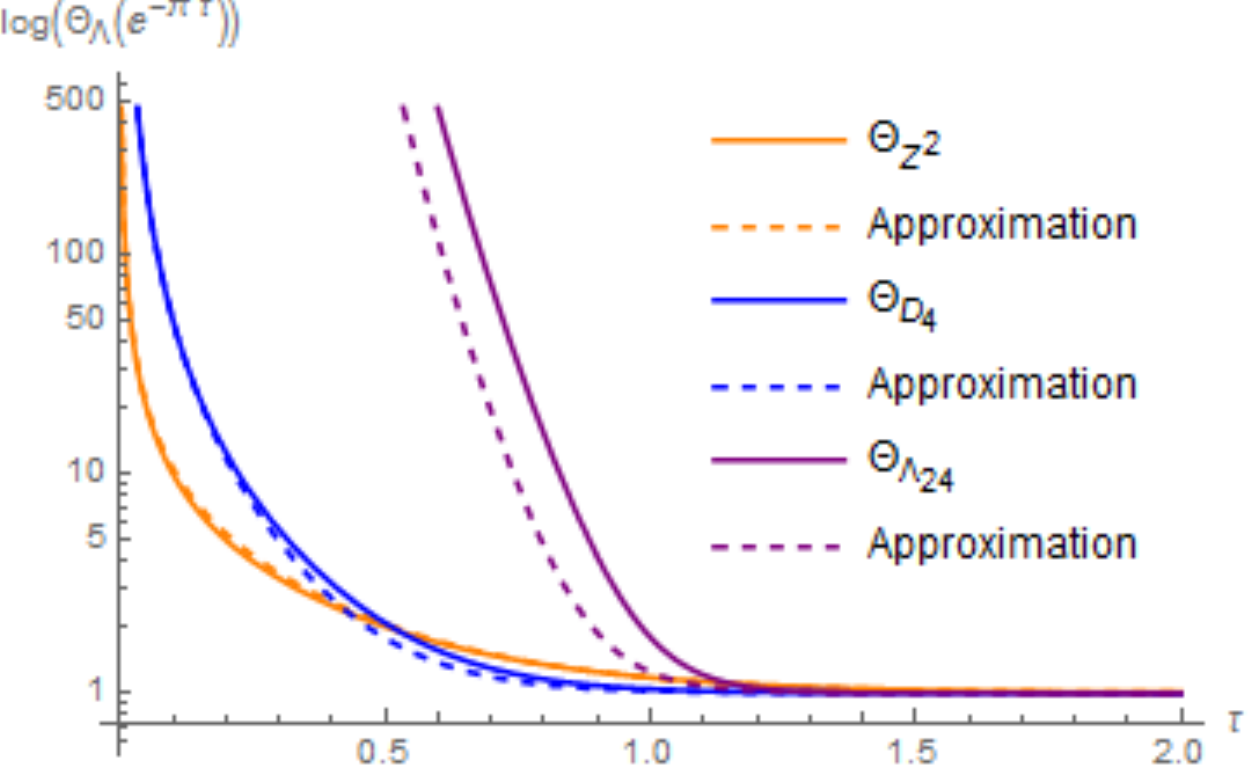}
    \caption{Approximation of the theta series of the lattices $\mathbb{Z}^2$, $D_4$ and the Leech lattice $\Lambda_{24}$, in dimensions $n = 2, 4$ and $24$.}
    \label{fig:theta_approx}
\end{figure}

\subsection{Lattice Sums and Flatness Factor}
\label{subsec:ff}
Let us denote the \emph{probability density function} (PDF) of the $n$-dimensional spherical Gaussian as
\begin{align*}
    g_n(\mathbf{x};\sigma) = \frac{1}{(\sqrt{2\pi}\sigma)^n}\exp\left(-\frac{||\mathbf{x}||^2}{2\sigma^2}\right),
\end{align*}    
and its sums over a (possibly shifted) lattice $\Lambda$ as 
\begin{align*}
    g_n(\Lambda+\mathbf{x};\sigma) := \sum\limits_{\lambda \in \Lambda}g_n(\lambda+\mathbf{x};\sigma).
\end{align*}
We then have the identity
\begin{align*}
    g_n(\Lambda;\sigma) = \frac{1}{(\sqrt{2\pi}\sigma)^n}\Theta_{\Lambda}\left(e^{-\frac{1}{2\sigma^2}}\right),
\end{align*} 
from which we see that $g_n(\Lambda;\sigma)$ only differs by constants from the standard theta series. It is easy to see that $g_n(\Lambda+\mathbf{x};\sigma)$ is $\Lambda$-periodic, and, for full lattices, it defines a PDF on the Voronoi cell $\mathcal{V}(\Lambda)$, called the \emph{lattice Gaussian PDF}. 

The following definition is crucial for all subsequent results.
\begin{definition}
Let $\Lambda \subset \R^n$ be a lattice generated by $M_\Lambda \in \mat(n,\R)$, with fundamental volume $\nu_\Lambda$. The \emph{flatness factor} $\varepsilon_{\Lambda}(\sigma)$ of $\Lambda$ is defined as 
    \begin{align*}
        \varepsilon_{\Lambda}(\sigma) = \varepsilon_{M_\Lambda}(\sigma) := \max \limits_{\mathbf{x} \in \R^n}\left|\frac{g_n(\Lambda+ M_\Lambda \mathbf{x};\sigma)}{1/\nu_\Lambda}-1\right|.
    \end{align*}
\end{definition}

The flatness factor was introduced in \cite{ling} as a wiretap information tool and measures the deviation of the lattice Gaussian PDF 
from the uniform distribution on $\mathcal{V}(\Lambda)$. 
As the maximum of $g_n(\Lambda+\mathbf{x};\sigma)$ is attained for $\mathbf{x} \in \Lambda$ \cite{ling}, the flatness factor of a full lattice can be expressed in terms of theta series, 
\begin{align}
\label{eqn:ff_theta}
    \varepsilon_{\Lambda}(\sigma) &= \nu_{\Lambda} g_n(\Lambda;\sigma)  - 1 \\
\label{eqn:ff_theta dual}
    &=  \Theta_{\Lambda^\ast}\left(e^{-2\pi\sigma^2}\right)-1.
\end{align}
Note that the additive group structure of a lattice is actually crucial in this work: these important formulas are based on the Poisson summation formula.

\section{System Model and Coset Codes}
\label{sec:system}

We consider a wireless fading channel with noise. Perfect channel state information is assumed at both receivers (CSIR), Bob and Eve; the transmitter is only assumed to know the channel statistics. As we are only interested in the eavesdropper's performance, we henceforth only consider the channel between Alice and Eve, and consequently forgo subscripts in the related quantities. Throughout this paper, random variables are denoted by capital letters and their realizations with lower-case letters.
Denote Alice's transmitted vector by $\mathbf{x} \in \R^n$, so that the channel equation is given by 
\begin{align*}
    \mathbf{y} = \mathbf{h} \mathbf{x}+\mathbf{n},
\end{align*}
where $\mathbf{h} \in \mat(m \times n,\R)$ is the realization of the fading, and the noise vector $\mathbf{N} \in \R^m$ is composed of i.i.d. components $N_i \sim \mathcal{N}(0,\sigma^2)$. We assume that $\mathbf{H}$ has full rank almost surely, but need not be a square matrix.

\begin{remark} Typically a complex fading channel model is considered, together with complex lattice codes. Such codes, however, can be be reduced to the real case with double the dimension. We consider here a real channel model since the class of real lattices is wider than that of complex lattices, most importantly including complex $\Z[e^{2 \pi i / 3}]$ -lattices, and since the average theta functions, to which both probability and information bounds reduce, are already computed explicitly \cite{belfiore, belfiore_mimo}.
\end{remark}

To confuse Eve, Alice uses \emph{lattice coset coding}. An original information vector thus corresponds not only to a lattice point, but to an entire coset, and the transmitted vector is then chosen randomly within the coset according to a distribution discussed below. More specifically, Alice is equipped with two nested lattices\footnote{The notation $\Lambda_b$, $\Lambda_e$ is chosen to indicate that the message intended for Bob is taken from $\Lambda_b$, while $\Lambda_e$ is the lattice that is chosen to confuse Eve.} $\Lambda_e \subset \Lambda_b \subset \R^n$, as well as an injective map from her message space $\mathcal{M}$ of cardinality $|\mathcal{M}| = |\Lambda_b/\Lambda_e|$, into the set of unique coset representatives of $\Lambda_b/\Lambda_e$, 
\begin{align*}
    \mathcal{E}: \mathcal{M} \to \Lambda_b \cap \mathcal{V}(\Lambda_e), \quad
    M \mapsto \lambda_M.
\end{align*}
Alice then chooses a representative of the coset class corresponding to $M$ at random, \emph{i.e.}, picks $\lambda_e \in \Lambda_e$ and transmits $\mathbf{x} = \lambda_M + \lambda_e \in (\lambda_M+\Lambda_e) \in \Lambda_b/\Lambda_e$. The transmitted vector now represents the original information bits, and in addition, contains random bits which are encoded by $\Lambda_e$. 

If $\mathbf{x} = M_\Lambda\mathbf{z}$ for $\mathbf{z} \in \Z^n$, then $\mathbf{h} \mathbf{x} = \mathbf{h} M_\Lambda\mathbf{z}$, and we can think of a lattice code under fading with CSIR as a Gaussian-channel lattice code where the code lattice realizes a random lattice with generator matrix $\mathbf{h} M_\Lambda$. We will henceforth denote the faded lattices $\Lambda_b$ and $\Lambda_e$ by $\Lambda_{b,\mathbf{h}}$ and $\Lambda_{e,\mathbf{h}}$, respectively. 

We denote the mutual information of two random variables $X, Y$ by $\info\left[X;Y\right]$, and for double conditions given a third random variable $Z$, we write $\info\left[X;(Y,Z)\right]$. The metric of interest is the information that Eve is able to extract from her observations, \emph{i.e.}, $\info\left[M;(\mathbf{Y,H})\right]$. The fading $\mathbf{H}$, noise $\mathbf{N}$ and Alice's message $M$ are assumed to be mutually independent.
The settings considered in this article are:
\begin{enumerate}
    \item Alice chooses coset class representatives uniformly at random, and the channel is a \emph{$\bmod\ \Lambda_s$} channel.
    
    \item Alice uses Gaussian coset coding.
\end{enumerate}
In the simplest setup, Alice chooses uniform representatives from a finite transmission region, and Eve's information is affected by the boundaries of that region. Naively speaking, if Eve has knowledge about the transmission region, she can often guess the transmitted vector correctly if the received vector lies outside of the transmission region. Our choice of setups can be roughly seen as removing such boundary effects by a modulo operation, and smoothing the boundary of the transmission region, respectively. The AWGN information bounds have been derived for these setups in \cite{ling}, and the information-theoretic results for the fading channels in \cite{mirghasemi, luzzi} are for the Gaussian coset coding setup.

\section{Information and Probability Bounds}
\label{sec:inf_bounds}

Let $\Lambda_b$ and $\Lambda_e$ be of dimension $n$, and assume that Eve simply decodes the received signal to the closest lattice point in $\Lambda_{e, \mathbf{h}}$. Then, probability of Eve correctly decoding the message is upper bounded by \cite{belfiore, belfiore_mimo}
\begin{align}
\label{eqn:p_eve}
& \mathbb{P}\left[\text{Decode correctly in } \Lambda_e:\Lambda_b \text{ coset code} \right] \nonumber \\
= &\mathbb{E}_{\mathbf{H}}\left[\mathbb{P}\left[\text{Decode correctly in } \Lambda_{e,\mathbf{h}}:\Lambda_{b,\mathbf{h}} \text{ AWGN coset code}\right]\right] \nonumber \\
\le &\mathbb{E}_{\mathbf{H}}\left[ \nu_{\Lambda_{b, \mathbf{h}}} g_n(\Lambda_{e, \mathbf{h}} ; \sigma) \right] \nonumber \\
= &\left[\Lambda_b : \Lambda_e\right]^{-1}\left(\mathbb{E}_{\mathbf{H}}\left[\varepsilon_{\Lambda_{e, \mathbf{h}}}(\sigma) \right] + 1\right).
\end{align}
The first step is Fubini's theorem and the independence of $\mathbf{N}$, $\mathbf{H}$ and $M$, while the second step follows from the probability bound in the AWGN setup \cite{oggier}. The bound is given for Rayleigh fading \emph{single-input single-output} (SISO) and MIMO channels in \cite{belfiore} and \cite{belfiore_mimo}, respectively. We recall that the flatness factor is defined through \eqref{eqn:ff_theta} for non-full lattices as well. 

We will give a variant of the information bounds \cite{ mirghasemi, luzzi} from which it is explicit that the bounds agree (up to constants) with this upper bound on the probability that Eve correctly decodes the message. Hence, to minimize this probability, it will be necessary to minimize the upper bound on the mutual information, and vice versa.

\begin{lemma}\cite[Lemma~2]{ling}
\label{lem:var_distance}
    Let $\mathbf{Y}$ be an $\R^n$-valued random variable with PDF $\rho_{\mathbf{Y}}$, and consider a message space $\mathcal{M}$ such that $\vert \mathcal{M} \vert \ge 4$. Suppose that there exists some PDF $\tilde{\rho}$ in $\R^n$ such that for all $m \in \mathcal{M}$, the variational distance 
\begin{align*}
    V(\rho_{\left\{\mathbf{Y}\vert M=m\right\}},\tilde{\rho}) := \int_{\mathbf{y} \in \R^n} \vert \rho_{\left\{\mathbf{Y} \vert M=m\right\}}(\mathbf{y}) - \tilde{\rho}(\mathbf{y}) \vert d^n \mathbf{y}
\end{align*}
is upper bounded\footnote{This assumption is implicit in \cite{ling} but necessary, seen by taking $\delta \to \infty$. This yields a small difference between our and earlier information bounds.} by $\delta \le e^{-1}/2$. Then,
\begin{align*}
    I\left[M; \mathbf{Y}\right] \le 2 \delta \log(\vert \mathcal{M} \vert) - 2 \delta \log(2 \delta ).
\end{align*}    
\end{lemma}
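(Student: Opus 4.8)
The plan is to reduce the claim to the continuity of Shannon entropy on the finite alphabet $\mathcal{M}$. The obstruction to a naive argument is that $I[M;\mathbf{Y}]$, written as an average Kullback--Leibler divergence $\sum_{m}\mathbb{P}[M=m]\,D\!\left(\rho_{\{\mathbf{Y}\mid M=m\}}\,\Vert\,\rho_{\mathbf{Y}}\right)$, is \emph{not} controlled by the variational distance, since a KL divergence can be arbitrarily large even for densities that are close in total variation. I would therefore work throughout with the representation $I[M;\mathbf{Y}] = \entropy[M] - \entropy[M\mid\mathbf{Y}]$, in which both terms are entropies of the \emph{finite} variable $M$, where Lipschitz-type continuity estimates are available.

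First I would record two elementary total-variation facts. Since $\rho_{\mathbf{Y}} = \sum_m \mathbb{P}[M=m]\,\rho_{\{\mathbf{Y}\mid M=m\}}$ is a convex combination, convexity of $V$ gives $V(\rho_{\mathbf{Y}},\tilde{\rho}) \le \delta$, and the triangle inequality then yields $V(\rho_{\{\mathbf{Y}\mid M=m\}},\rho_{\mathbf{Y}}) \le 2\delta$ for every $m$. Writing $P_{M\mid\mathbf{Y}=\mathbf{y}}$ for the posterior of $M$, $P_M$ for its prior, and $V_{\mathbf{y}} := V(P_{M\mid\mathbf{Y}=\mathbf{y}},P_M)$, a direct computation averaging these bounds over $m$ shows that the true joint law is close to the product reference $P_M\otimes\rho_{\mathbf{Y}}$, namely $\bar{V} := \int \rho_{\mathbf{Y}}(\mathbf{y})\,V_{\mathbf{y}}\,d^n\mathbf{y} = \sum_m \mathbb{P}[M=m]\,V(\rho_{\{\mathbf{Y}\mid M=m\}},\rho_{\mathbf{Y}}) \le 2\delta$.

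Next, since the reference $P_M\otimes\rho_{\mathbf{Y}}$ shares the $\mathbf{Y}$-marginal $\rho_{\mathbf{Y}}$ and makes $M$ and $\mathbf{Y}$ independent, the definition of mutual information gives
\begin{align*}
I[M;\mathbf{Y}] = \int \rho_{\mathbf{Y}}(\mathbf{y})\left(\entropy(P_M) - \entropy(P_{M\mid\mathbf{Y}=\mathbf{y}})\right)d^n\mathbf{y}.
\end{align*}
The integrand is a difference of entropies of two distributions on the size-$|\mathcal{M}|$ alphabet lying at variational distance $V_{\mathbf{y}}$; a Fannes-type continuity estimate bounds it by $V_{\mathbf{y}}\log|\mathcal{M}| - V_{\mathbf{y}}\log V_{\mathbf{y}}$. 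Integrating against $\rho_{\mathbf{Y}}$, and using concavity of $t\mapsto -t\log t$ (Jensen) to move the integral inside, gives $I[M;\mathbf{Y}] \le \bar{V}\log|\mathcal{M}| - \bar{V}\log\bar{V}$.

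Finally I would invoke the hypothesis $\delta\le e^{-1}/2$, i.e.\ $\bar{V}\le 2\delta\le e^{-1}$. On $[0,e^{-1}]$ the map $t\mapsto -t\log t$ is increasing (its derivative is $-\log t-1\ge 0$ there), so $-\bar{V}\log\bar{V}\le -2\delta\log(2\delta)$, and likewise $\bar{V}\log|\mathcal{M}|\le 2\delta\log|\mathcal{M}|$, which yields the stated bound. I expect the main obstacle to be the continuity-of-entropy estimate itself: one must prove or cite the sharp Fannes-type inequality producing exactly the $-t\log t$ term, and control the averaging via Jensen. The condition $\delta\le e^{-1}/2$ is not cosmetic but is precisely what the monotonicity step requires, which is the subtlety flagged in the footnote and the source of the small discrepancy with \cite{ling}.
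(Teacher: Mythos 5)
First, a point of context: the paper itself does not prove this lemma --- it is imported verbatim from \cite{ling} (Lemma~2 there), with the footnoted hypothesis $\delta\le e^{-1}/2$ added by the present authors; the proof in that reference runs through Csisz\'ar's entropy-continuity lemma. Your strategy is exactly that classical route, and most of your steps are sound: the convexity/triangle argument giving $V(\rho_{\{\mathbf{Y}|M=m\}},\rho_{\mathbf{Y}})\le 2\delta$; the Bayes computation identifying $\bar V=\int\rho_{\mathbf{Y}}(\mathbf{y})V_{\mathbf{y}}\,d^n\mathbf{y}=\sum_m \mathbb{P}[M=m]\,V(\rho_{\{\mathbf{Y}|M=m\}},\rho_{\mathbf{Y}})\le 2\delta$; the Jensen step (the first term is linear, the second concave); and the monotonicity of $-t\log t$ on $[0,e^{-1}]$, where you correctly isolate the role of $\delta\le e^{-1}/2$ --- this matches the point of the paper's footnote.

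The genuine gap is the pointwise Fannes step, which you flag as ``the main obstacle'' but do not resolve --- and it is precisely the analytic heart of the lemma. The estimate $\entropy(P_M)-\entropy(P_{M\mid\mathbf{Y}=\mathbf{y}})\le V_{\mathbf{y}}\log|\mathcal{M}|-V_{\mathbf{y}}\log V_{\mathbf{y}}$ is standard (Csisz\'ar; Cover--Thomas, Thm.~17.3.3) \emph{only under the hypothesis} $V_{\mathbf{y}}\le 1/2$, and nothing in your setup guarantees this pointwise: the hypotheses control only the average $\bar V\le 2\delta$, while on an exceptional set of $\mathbf{y}$ of small positive probability the posterior can be nearly a point mass, so that $V_{\mathbf{y}}$ is close to $2$ (take each $\rho_{\{\mathbf{Y}|M=m\}}$ to be a common density plus a tiny bump of mass $\delta/2$ at a location depending on $m$; for $\mathbf{y}$ in bump $m$ the posterior concentrates on $m$). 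Beyond $V=1/2$ the inequality in this form is simply false on small alphabets: for $|\mathcal{M}|=2$, prior $(0.1,0.9)$ and posterior $(1,0)$ give entropy difference $\approx 0.325$ nats while $V\log|\mathcal{M}|-V\log V\approx 0.19$ at $V=1.8$. The repair is to show that for $|\mathcal{M}|\ge 4$ the bound \emph{does} extend to all $V\in[0,2]$, e.g.\ by combining the sharp continuity bound $|\entropy(P)-\entropy(Q)|\le T\log(|\mathcal{M}|-1)+h(T)$ (with $T=V/2$ and $h$ the binary entropy, valid with no restriction on $T$) with a calculation showing it is dominated by $V\log|\mathcal{M}|-V\log V$ once $|\mathcal{M}|\ge 4$; this is where the hypothesis $|\mathcal{M}|\ge 4$ --- which your argument never uses, a telltale sign of the gap --- must enter. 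The cruder repair of splitting on $\{V_{\mathbf{y}}>1/2\}$ and using Markov plus the trivial bound $\log|\mathcal{M}|$ loses constant factors and does not recover the stated bound. In short: right route, correct bookkeeping, but the one inequality you defer to citation is not citable in the range where you need it, and closing that hole is the actual content of the lemma.
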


\subsection{The $\bmod$ $\Lambda_s$ Wiretap Channel}
\label{subsec:mod_lambda_channel}
We first consider a strategy where the boundary of the shaping region set to be the boundary of $\mathcal{V}(\Lambda_s)$, where $\Lambda_s$ is a third lattice $\Lambda_s \subset \Lambda_e \subset \Lambda_b \subset \R^n$, called the \emph{shaping lattice}. The random shifting vector $\lambda_e$ is chosen uniformly at random from the set of representatives of $\Lambda_e/\Lambda_s$ in $\mathcal{V}(\Lambda_s)$. Then, slightly artificially, we assume that Eve only receives knowledge of the equivalence class $\mathbf{y}/\Lambda_{s,\mathbf{h}}$. 

\begin{theorem}
\label{thm:inf_bound_mod_lambda}
In the $\bmod$ $\Lambda_s$ channel setup, let the message $M$ have any distribution on the message space $\mathcal{M}$ of cardinality $\vert \mathcal{M} \vert \ge 4$, and assume that $\E_{\mathbf{H}}\left[\varepsilon_{\Lambda_{e, \mathbf{h}}}(\sigma)\right] \le e^{-1}/2$. Then,
\begin{align*}
\info\left[M; (\mathbf{Y}/\Lambda_{s, \mathbf{h}} , \mathbf{H})\right] &\le 2 (e + 1) \E_{ \mathbf{H} }  \left[ \varepsilon_{\Lambda_{e, \mathbf{h}}} (\sigma)  \right] \log(\vert \mathcal{M} \vert) \\
&-2 \E_{ \mathbf{H} }  \left[ \varepsilon_{\Lambda_{e, \mathbf{h}}} (\sigma)  \right]  \log(2 \E_{ \mathbf{H} }  \left[ \varepsilon_{\Lambda_{e, \mathbf{h}}} (\sigma)\right]). 
\end{align*}
\end{theorem}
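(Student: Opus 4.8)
The plan is to reduce the claim to a per-realization estimate and then average over the fading. Since $M$ and $\mathbf{H}$ are independent we have $\info[M;\mathbf{H}]=0$, so the chain rule gives $\info[M;(\mathbf{Y}/\Lambda_{s,\mathbf{h}},\mathbf{H})]=\E_{\mathbf{H}}\left[\info[M;\mathbf{Y}/\Lambda_{s,\mathbf{h}}\mid\mathbf{H}=\mathbf{h}]\right]$. It therefore suffices to bound the conditional mutual information for each fixed realization $\mathbf{h}$ and integrate. For fixed $\mathbf{h}$ I would first identify the conditional law of the observation: given $M=m$, Alice transmits $\mathbf{x}=\lambda_m+\lambda_e$ with $\lambda_e$ uniform over the representatives of $\Lambda_e/\Lambda_s$, so on the torus $\R^n/\Lambda_{s,\mathbf{h}}$ the received vector has density $\rho_m(\mathbf{y})=|\Lambda_e/\Lambda_s|^{-1}g_n(\Lambda_{e,\mathbf{h}}+\mathbf{y}-\mathbf{h}\lambda_m;\sigma)$. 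The key identity here is that summing the Gaussian over $\Lambda_{s,\mathbf{h}}$ (the modulo operation) and averaging over the uniform shifts $\mathbf{h}\lambda_e$ collapses into a single sum over $\Lambda_{e,\mathbf{h}}$.

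Comparing $\rho_m$ with the uniform density $1/\nu_{\Lambda_{s,\mathbf{h}}}$ on $\mathcal{V}(\Lambda_{s,\mathbf{h}})$ and exploiting $\Lambda_{e,\mathbf{h}}$-periodicity to fold the integral onto $\mathcal{V}(\Lambda_{e,\mathbf{h}})$, the definition of the flatness factor yields the clean bound $V(\rho_m,1/\nu_{\Lambda_{s,\mathbf{h}}})\le\varepsilon_{\Lambda_{e,\mathbf{h}}}(\sigma)$ for every $m$, with the reference density $\tilde\rho$ taken to be the uniform density, which is independent of $m$. On any realization with $\varepsilon_{\Lambda_{e,\mathbf{h}}}(\sigma)\le e^{-1}/2$, Lemma~\ref{lem:var_distance} then applies directly with $\delta=\varepsilon_{\Lambda_{e,\mathbf{h}}}(\sigma)$.

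The main obstacle is that the hypothesis only controls the \emph{expected} flatness factor, whereas Lemma~\ref{lem:var_distance} requires $\delta\le e^{-1}/2$ for each realization, which individual realizations may violate. I would split the fading into the good event $G=\{\varepsilon_{\Lambda_{e,\mathbf{h}}}(\sigma)\le e^{-1}/2\}$ and its complement. On $G$ I use the Lemma~\ref{lem:var_distance} bound, and on $G^c$ the trivial estimate $\info\le\log|\mathcal{M}|$ together with Markov's inequality $\mathbb{P}[G^c]\le 2e\,\E_{\mathbf{H}}[\varepsilon_{\Lambda_{e,\mathbf{h}}}(\sigma)]$. Writing $\delta_{\mathbf{h}}=\varepsilon_{\Lambda_{e,\mathbf{h}}}(\sigma)$, the linear-in-$\log|\mathcal{M}|$ terms combine as $2\,\E[\delta_{\mathbf{h}}]\log|\mathcal{M}|$ from $G$ plus $2e\,\E[\delta_{\mathbf{h}}]\log|\mathcal{M}|$ from $G^c$, which is exactly the factor $2(e+1)$ appearing in the statement.

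Finally, to obtain the second (entropy-like) term I would treat the function $\phi(\delta)=-\delta\log(2\delta)$, which is nonnegative, increasing and concave on $[0,e^{-1}/2]$. Setting $\delta_{\mathbf{h}}$ to $0$ off the good event defines a truncated variable $\tilde\delta_{\mathbf{h}}\in[0,e^{-1}/2]$ with $\tilde\delta_{\mathbf{h}}\le\delta_{\mathbf{h}}$ and $\phi(\tilde\delta_{\mathbf{h}})=\phi(\delta_{\mathbf{h}})\mathbbm{1}_G$, since $\phi(0)=0$. Jensen's inequality together with monotonicity then gives $\E[\phi(\delta_{\mathbf{h}})\mathbbm{1}_G]\le\phi(\E[\tilde\delta_{\mathbf{h}}])\le\phi(\E[\delta_{\mathbf{h}}])$, where the second step uses $\E[\delta_{\mathbf{h}}]\le e^{-1}/2$ so that the argument stays in the increasing regime. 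Doubling produces precisely $-2\,\E[\delta_{\mathbf{h}}]\log(2\,\E[\delta_{\mathbf{h}}])$, and combining with the linear contribution from the previous paragraph completes the claimed bound. The delicate points to get right are the concavity-plus-truncation step, which is what prevents a naive Jensen argument from failing due to the possibly negative values of $\phi$ on $G^c$, and the bookkeeping of the two sources of the $\log|\mathcal{M}|$ term.
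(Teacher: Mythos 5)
Your proposal is correct and takes essentially the same route as the paper's own proof: conditioning on the fading realization, bounding the variational distance between the conditional output density and the uniform density by $\varepsilon_{\Lambda_{e,\mathbf{h}}}(\sigma)$ so that Lemma~\ref{lem:var_distance} applies, splitting on the event $\{\varepsilon_{\Lambda_{e,\mathbf{h}}}(\sigma)\le e^{-1}/2\}$, and finishing with Markov's inequality for the bad event and Jensen's inequality plus monotonicity of $-x\log(2x)$ on $[0,e^{-1}/2]$ for the good event. Your truncation of $\delta_{\mathbf{h}}$ to zero off the good event is just a repackaging of the paper's conditional-expectation bookkeeping (together with $\mathbb{P}[G]\le 1$), and your explicit computation of the density on the torus makes precise a step the paper states only tersely.
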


\begin{proof}
By the independence assumptions, we have the identity
\begin{align*}
    \info\left[M; (\mathbf{Y}/\Lambda_{s, \mathbf{h}} , \mathbf{H})\right] = \E_\mathbf{H} \left[ \info\left[M ; (\mathbf{Y}/\Lambda_{s, \mathbf{h}}  \vert \mathbf{H}=\mathbf{h})\right]\right].
\end{align*}
We divide $\mathbf{Y}$ into components $\mathbf{Y}_{\perp}$ and $\mathbf{Y}_{\parallel}$, perpendicular and parallel to the nested lattices $\Lambda_{*, \mathbf{h}}$. By the independence assumptions, given $\mathbf{H}=\mathbf{h}$, $\mathbf{Y}_{\perp}$ is independent of the transmitted lattice point and hence of the message $M$. Thus, $\info\left[M ; (\mathbf{Y}/\Lambda_{s, \mathbf{h}}  \vert \mathbf{H}=\mathbf{h})\right] = \info\left[M ; (\mathbf{Y}_{\parallel}/\Lambda_{s, \mathbf{h}}  \vert \mathbf{H}=\mathbf{h})\right]$.
For a fixed channel realization $\mathbf{h}$, the channel is just a Gaussian $\bmod$ $\Lambda_{s, \mathbf{h}}$ channel. Then, given the message $m$, the projection modulo Eve's lattice, $\mathbf{Y}_{\parallel}/\Lambda_{e, \mathbf{h}}$, has the lattice $\Lambda_{e, \mathbf{h}}$ Gaussian distribution, of which the PDF of $\mathbf{Y}_{\parallel}/\Lambda_{s, \mathbf{h}}$ is simply a scaling. Denoting $\varepsilon = \varepsilon_{\Lambda_{e, \mathbf{h}}} (\sigma)$, Lemma~\ref{lem:var_distance} and the trivial upper bound $\log (\vert \mathcal{M} \vert)$ on the mutual information yield
\begin{align*}
    \info & \left[M; (\mathbf{Y}/\Lambda_{s, \mathbf{h}} , \mathbf{H})\right]  \\
 &\le \E_\mathbf{H}\left[\mathbbm{1}_{\left\{\varepsilon \le e^{-1} / 2\right\}} (2 \varepsilon \log(\vert \mathcal{M} \vert) 
 - 2 \varepsilon \log(2 \varepsilon))\right] \\
 &\quad + \E_{\mathbf{H}}\left[\mathbbm{1}_{\left\{\varepsilon > e^{-1}/2\right\}} \log(\vert \mathcal{M} \vert)\right] \\
&= \mathbb{P}_{\mathbf{H}} \left[ \varepsilon \le e^{-1} / 2 \right] \E_{\left\{\mathbf{H} \vert \varepsilon \le e^{-1} / 2\right\}}\left[ 2 \varepsilon \log(\vert \mathcal{M} \vert) - 2 \varepsilon \log(2 \varepsilon ) \right]\\ 
&\quad + \E_{\mathbf{H}} \left[ \mathbbm{1}_{\left\{\varepsilon > e^{-1} / 2\right\}}\right] \log(\vert \mathcal{M} \vert).
\end{align*}

We upper bound each of the summands separately. For the first one, we apply Jensen's inequality to the convex function $x \log x$ and bound $\mathbb{P}_{\mathbf{H}} \left[ \varepsilon \le e^{-1} / 2 \right] \le 1$, which yields
\begin{align*}
    &\mathbb{P}_{\mathbf{H}} \left[ \varepsilon \le e^{-1} / 2 \right] \E_{\left\{\mathbf{H} \vert \varepsilon \le e^{-1} / 2\right\}}\left[ 2 \varepsilon \log(\vert \mathcal{M} \vert) - 2 \varepsilon \log(2 \varepsilon ) \right] \\
   &\le 2 \E_{\left\{ \mathbf{H} \vert \varepsilon \le e^{-1} / 2\right\}}  \left[ \varepsilon \right]\left(\log(\vert \mathcal{M} \vert) - \log(2 \E_{ \left\{\mathbf{H} \vert \varepsilon \le e^{-1} / 2\right\}}\left[\varepsilon\right])\right)  \\
   & \le 2 \E_{ \mathbf{H}}\left[\varepsilon\right]\left(\log(\vert \mathcal{M} \vert) - \log(2 \E_{\mathbf{H}}\left[\varepsilon\right])\right).
\end{align*}
The second inequality holds since $0 \le \E_{\left\{\mathbf{H} \vert \varepsilon \le e^{-1} / 2\right\}}\left[\varepsilon\right] \le \E_{\mathbf{H}}\left[\varepsilon\right] \le e^{-1}/2$, and $x \log x$ is decreasing in this interval.

For the second term, we can use Markov's inequality to get
\begin{align*}
    \E_{\mathbf{H}}\left[\mathbbm{1}_{\left\{\varepsilon > e^{-1} / 2\right\}}\right] \le \frac{\E_{\mathbf{H}}\left[\varepsilon\right]}{e^{-1} / 2} =2e \E_{\mathbf{H}}\left[\varepsilon\right].
\end{align*}
The result follows.
\end{proof}

\subsection{Discrete Gaussian Coset Coding}
\label{subsec:discrete_gaussian_channel}

While an insightful scenario, the restrictions imposed in the $\bmod$ $\Lambda_s$ channel setting are not necessarily realistic. To be more general, we consider a second approach, where the boundary is smoothed instead of removed. Here, the vectors $\lambda_e$ corresponding to the random bits are chosen so that the message $\mathbf{X} = \lambda_M + \lambda_e$ follows the Gaussian distribution centered on the shifted lattice $\Lambda_e + \lambda_M$, that is, for all $\mathbf{x} \in \Lambda_e + \lambda_M$,
\begin{align*}
    P\left[\mathbf{X} = \mathbf{x}\right] = \frac{g_n(\mathbf{x};\sigma_s)}{g_n(\Lambda_e+\lambda_M;\sigma_s)} =: D_{\Lambda_e,\lambda_M}(\mathbf{x};\sigma_s).
\end{align*}
The parameter $\sigma_s^2$ is called the \emph{shaping variance}. 

The following result, which can be regarded as a special case of \cite[Lemma~1]{luzzi} in our notation, is needed for the generalization of the information bound to this second setting.

\begin{lemma}
\label{lem:discrete_continuous_gaussian}
Fix $\mathbf{h} \in \mat(m \times n,\R)$ and let $\mathbf{X}$ have the centered discrete Gaussian distribution $D_{\Lambda_e,\lambda_M} (\mathbf{x}; \sigma_s) $, where $\Lambda_e$ is full. Let $\mathbf{N} \sim \mathcal{N}(\mathbf{0}, \sigma^2 \mathbf{I}_{m})$ be a spherical (continuous) Gaussian vector independent of $\mathbf{X}$. Assume furthermore that $\varepsilon_{\sqrt{\sigma^2/\sigma_s^2 I_{n} +  \mathbf{h}^t\mathbf{h}}\Lambda_e}(\sigma) \le 1/2$, where $\sqrt{\cdot}$ denotes matrix square root. Then, the PDF $\rho (\mathbf{y})$ of $\mathbf{Y} = \mathbf{h} \mathbf{X} + \mathbf{N}$ and the PDF $\tilde{\rho}(\mathbf{y})$ of $\mathcal{N}(\mathbf{0}, (\sigma^2 I_{m} + \sigma_s^2 \mathbf{h}\mathbf{h}^t))$,
\begin{equation*}
\tilde{\rho} (\mathbf{y}) = \frac{\exp\left(-\frac{1}{2}  \mathbf{y}^t(\sigma^2 I_{m} + \sigma_s^2 \mathbf{hh}^t)^{-1} \mathbf{y}\right) }{(\sqrt{2 \pi})^m\sqrt{\det  (\sigma^2 I_{m} + \sigma_s^2 \mathbf{hh}^t)}},
\end{equation*}
have variational distance at most $4 \varepsilon_{\sqrt{\sigma^2/\sigma_s^2 I_{n} +  \mathbf{h}^t \mathbf{h}} \Lambda_e}(\sigma)$.
\end{lemma}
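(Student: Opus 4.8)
The plan is to write the density $\rho$ of $\mathbf{Y}=\mathbf{h}\mathbf{X}+\mathbf{N}$ as a lattice Gaussian sum, factor it as $\tilde\rho(\mathbf{y})$ times a ratio of two such sums, and then bound that ratio uniformly in $\mathbf{y}$ by the flatness factor. First I would expand, using the definition of $D_{\Lambda_e,\lambda_M}$,
\[
  \rho(\mathbf{y}) = \frac{1}{g_n(\Lambda_e+\lambda_M;\sigma_s)}\sum_{\mathbf{x}\in\Lambda_e+\lambda_M} g_n(\mathbf{x};\sigma_s)\,g_m(\mathbf{y}-\mathbf{h}\mathbf{x};\sigma),
\]
insert the explicit Gaussian densities, and view the exponent as a quadratic form in $(\mathbf{x},\mathbf{y})$. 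The core computation is then to complete the square in $\mathbf{x}$.

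The combined exponent is $-\tfrac12\bigl(\mathbf{x}^t A\mathbf{x} - \tfrac{2}{\sigma^2}\mathbf{y}^t\mathbf{h}\mathbf{x} + \tfrac{1}{\sigma^2}\lVert\mathbf{y}\rVert^2\bigr)$ with $A=\tfrac{1}{\sigma_s^2}I_n+\tfrac{1}{\sigma^2}\mathbf{h}^t\mathbf{h}$. Writing $A=\tfrac{1}{\sigma^2}B^t B$ for the matrix square root $B:=\sqrt{\sigma^2/\sigma_s^2\,I_n+\mathbf{h}^t\mathbf{h}}$ and completing the square around $\mathbf{c}(\mathbf{y})=\tfrac{1}{\sigma^2}A^{-1}\mathbf{h}^t\mathbf{y}$, the Woodbury identity shows that the part of the exponent not involving $\mathbf{x}$ is exactly $-\tfrac12\mathbf{y}^t(\sigma^2 I_m+\sigma_s^2\mathbf{h}\mathbf{h}^t)^{-1}\mathbf{y}$, i.e.\ the exponent of $\tilde\rho$. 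After the substitution $\mathbf{x}\mapsto B\mathbf{x}$ the remaining $\mathbf{x}$-sum becomes a Gaussian sum of width $\sigma$ over a shift of the lattice $B\Lambda_e=\sqrt{\sigma^2/\sigma_s^2\,I_n+\mathbf{h}^t\mathbf{h}}\,\Lambda_e$ --- precisely the lattice in the statement, which is the conceptual reason the matrix square root enters. Collecting the scalar prefactors and simplifying via a determinant identity (Sylvester's theorem), the constant reduces exactly to $\nu_{B\Lambda_e}/\nu_{\Lambda_e}$, giving $\rho(\mathbf{y})=\tilde\rho(\mathbf{y})\,F(\mathbf{y})$ with
\[
  F(\mathbf{y})=\frac{\nu_{B\Lambda_e}\,g_n\!\left(B\Lambda_e+\mathbf{v}(\mathbf{y});\sigma\right)}{\nu_{\Lambda_e}\,g_n(\Lambda_e+\lambda_M;\sigma_s)}
\]
for a suitable $\mathbf{y}$-dependent shift $\mathbf{v}(\mathbf{y})$.

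Now I would invoke the flatness factor. Denoting $\varepsilon'=\varepsilon_{B\Lambda_e}(\sigma)$, the numerator satisfies $\nu_{B\Lambda_e}\,g_n(B\Lambda_e+\mathbf{v}(\mathbf{y});\sigma)=1+\theta_1(\mathbf{y})\varepsilon'$ with $\lvert\theta_1(\mathbf{y})\rvert\le1$ uniformly in $\mathbf{y}$, while the $\mathbf{y}$-independent denominator equals $1+\theta_2\varepsilon_s$ with $\varepsilon_s=\varepsilon_{\Lambda_e}(\sigma_s)$ and $\lvert\theta_2\rvert\le1$. To reduce the bound to the single flatness factor $\varepsilon'$, I would exploit that both $\rho$ and $\tilde\rho$ integrate to $1$: integrating $\rho=F\tilde\rho$ forces $1+\theta_2\varepsilon_s=1+\varepsilon'\,\E_{\tilde\rho}[\theta_1]$, whence $\lvert\theta_2\varepsilon_s\rvert\le\varepsilon'$. (Alternatively, Poisson summation together with $\mathbf{h}^t\mathbf{h}\succeq0$ yields the monotonicity $\varepsilon_{\Lambda_e}(\sigma_s)\le\varepsilon'$ directly.)

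With the denominator deviation controlled by $\varepsilon'$, the hypothesis $\varepsilon'\le1/2$ gives the uniform bound $\lvert F(\mathbf{y})-1\rvert=\frac{\lvert\theta_1(\mathbf{y})\varepsilon'-\theta_2\varepsilon_s\rvert}{1+\theta_2\varepsilon_s}\le\frac{2\varepsilon'}{1-\varepsilon'}\le4\varepsilon'$, and hence $V(\rho,\tilde\rho)=\int_{\R^m}\tilde\rho(\mathbf{y})\,\lvert F(\mathbf{y})-1\rvert\,d^m\mathbf{y}\le4\varepsilon'$, as claimed. I expect the main obstacle to be the completing-the-square and Woodbury/determinant bookkeeping needed to verify simultaneously that the $\mathbf{y}$-marginal is exactly $\tilde\rho$ and that the scalar prefactor collapses to $\nu_{B\Lambda_e}/\nu_{\Lambda_e}$; once the transformed lattice $B\Lambda_e$ is correctly identified, the remaining estimates are routine.
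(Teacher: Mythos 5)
Your proof is correct, but note that the paper itself contains no proof of this lemma: it is imported wholesale as ``a special case of \cite[Lemma~1]{luzzi} in our notation,'' so any complete argument is necessarily a different route from the paper's. What you have reconstructed is essentially the standard Regev/Ling-style smoothing argument that underlies the cited result: I checked the key computations and they hold --- the matrix $A=\tfrac{1}{\sigma_s^2}I_n+\tfrac{1}{\sigma^2}\mathbf{h}^t\mathbf{h}$ satisfies $A=\tfrac{1}{\sigma^2}B^2$ with $B=\sqrt{\sigma^2/\sigma_s^2 I_n+\mathbf{h}^t\mathbf{h}}$, the Woodbury identity indeed turns the residual $\mathbf{y}$-quadratic form into $(\sigma^2 I_m+\sigma_s^2\mathbf{h}\mathbf{h}^t)^{-1}$, Sylvester's identity gives $\sqrt{\det(\sigma^2 I_m+\sigma_s^2\mathbf{h}\mathbf{h}^t)}=\sigma^{m-n}\sigma_s^{n}\det B$ so the prefactor collapses to $\det B=\nu_{B\Lambda_e}/\nu_{\Lambda_e}$, and re-indexing the discrete sum by $\mathbf{u}=B\mathbf{x}$ produces the shifted lattice $B\Lambda_e$ of the statement. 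Two features of your write-up are worth singling out. First, your normalization trick --- integrating $\rho=F\tilde\rho$ against $\tilde\rho$ to force $\vert\theta_2\varepsilon_s\vert\le\varepsilon'$ --- is a genuinely neat shortcut: the proofs in the literature (\cite{ling}, \cite{luzzi}) instead control the denominator sum $g_n(\Lambda_e+\lambda_M;\sigma_s)$ via a separate monotonicity/smoothing-parameter lemma, which is exactly your parenthetical alternative ($\sigma^2 B^{-2}\preceq\sigma_s^2 I_n$ plus the dual theta-series expression of the flatness factor gives $\varepsilon_{\Lambda_e}(\sigma_s)\le\varepsilon_{B\Lambda_e}(\sigma)$ termwise). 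Second, your final estimate is consistent with the paper's conventions: the paper's $V$ is the full $L^1$ distance, and under $\varepsilon'\le 1/2$ one gets $\vert F-1\vert\le 2\varepsilon'/(1-\varepsilon')\le 4\varepsilon'$, hence $V(\rho,\tilde\rho)\le 4\varepsilon'$ as claimed. So the proposal supplies a self-contained proof of a statement the paper only cites, and does so with a small simplification over the standard argument.
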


\begin{theorem}
Consider the Rayleigh fading channel with discrete Gaussian coset coding. Let the message $M$ have any distribution on the message space $\mathcal{M}$ of cardinality $\vert \mathcal{M} \vert \ge 4$. Assume that 
    $E := \E_{\mathbf{H}}\left[\varepsilon_{\sqrt{ \sigma^2 / \sigma_s^2 I_{n} +  \mathbf{h}^t \mathbf{h} } \Lambda_e}(\sigma)\right] \le \frac{1}{8e}.$
Then,
\begin{align*}
    \info\left[M; (\mathbf{Y} , \mathbf{H})\right] \le 8 (1+ e) E  \log(\vert \mathcal{M} \vert) - 8 E  \log\left(8 E \right).
\end{align*}
\end{theorem}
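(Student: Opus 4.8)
The plan is to mirror the proof of Theorem~\ref{thm:inf_bound_mod_lambda}, replacing its $\bmod$ $\Lambda_s$ variational-distance computation with Lemma~\ref{lem:discrete_continuous_gaussian}. First I would invoke the mutual independence of $M$, $\mathbf{H}$ and $\mathbf{N}$ to peel off the fading as an outer expectation, writing $\info[M;(\mathbf{Y},\mathbf{H})] = \E_\mathbf{H}\left[\info[M;(\mathbf{Y}\mid\mathbf{H}=\mathbf{h})]\right]$. For a fixed realization $\mathbf{h}$, the channel reduces to the additive model $\mathbf{Y}=\mathbf{h}\mathbf{X}+\mathbf{N}$ with $\mathbf{X}$ drawn from the centered discrete Gaussian $D_{\Lambda_e,\lambda_M}(\,\cdot\,;\sigma_s)$ determined by the message $m$. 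Abbreviating $\varepsilon = \varepsilon_{\sqrt{\sigma^2/\sigma_s^2 I_n+\mathbf{h}^t\mathbf{h}}\Lambda_e}(\sigma)$, Lemma~\ref{lem:discrete_continuous_gaussian} supplies, whenever $\varepsilon\le 1/2$, a single message-independent reference density $\tilde\rho$ (the Gaussian $\mathcal{N}(\mathbf{0},\sigma^2 I_m+\sigma_s^2\mathbf{h}\mathbf{h}^t)$) such that $V(\rho_{\{\mathbf{Y}\mid M=m\}},\tilde\rho)\le 4\varepsilon$ for every $m$. Note that, unlike in Theorem~\ref{thm:inf_bound_mod_lambda}, no perpendicular/parallel splitting of $\mathbf{Y}$ is required here: Lemma~\ref{lem:discrete_continuous_gaussian} already absorbs the passage from the $n$-dimensional lattice to the $m$-dimensional received space.

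Next I would feed this variational bound into Lemma~\ref{lem:var_distance} with $\delta=4\varepsilon$. Its hypothesis $\delta\le e^{-1}/2$ becomes $\varepsilon\le 1/(8e)$, which is exactly where the assumed threshold on $E$ originates and which also guarantees $\varepsilon\le 1/2$, so that Lemma~\ref{lem:discrete_continuous_gaussian} indeed applies. Since this threshold must hold pointwise in $\mathbf{h}$ while the hypothesis only controls the \emph{average} $E$, I would split the outer expectation exactly as in Theorem~\ref{thm:inf_bound_mod_lambda}: on the event $\{\varepsilon\le 1/(8e)\}$, Lemma~\ref{lem:var_distance} gives $\info[M;(\mathbf{Y}\mid\mathbf{H}=\mathbf{h})]\le 8\varepsilon\log(\vert\mathcal{M}\vert)-8\varepsilon\log(8\varepsilon)$ (using $2\delta=8\varepsilon$), whereas on the complementary event I would fall back on the trivial bound $\log(\vert\mathcal{M}\vert)$.

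To conclude I would bound the two resulting terms separately. For the first, an application of Jensen's inequality to the convex map $x\mapsto x\log x$, together with $\mathbb{P}_\mathbf{H}[\varepsilon\le 1/(8e)]\le 1$ and the monotonicity of $x\log x$ on $(0,e^{-1}]$ (valid because $8\E[\varepsilon]\le 8E\le e^{-1}$, and using $\log(\vert\mathcal{M}\vert)\ge 0$ since $\vert\mathcal{M}\vert\ge 4$), yields $8E\log(\vert\mathcal{M}\vert)-8E\log(8E)$; this is the step that converts the conditional expectation back to the unconditional $E$. For the second, Markov's inequality gives $\mathbb{P}_\mathbf{H}[\varepsilon>1/(8e)]\le 8eE$, contributing $8eE\log(\vert\mathcal{M}\vert)$. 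Adding the two bounds produces $8(1+e)E\log(\vert\mathcal{M}\vert)-8E\log(8E)$, as claimed.

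I expect the only genuinely delicate point to be the bookkeeping of constants and thresholds: tracking how the factor $4$ from Lemma~\ref{lem:discrete_continuous_gaussian} propagates through $\delta=4\varepsilon$ into the factor $8$ and the threshold $1/(8e)$, and verifying that the monotonicity interval for $x\log x$ is respected after conditioning. The structural argument itself is a direct transcription of Theorem~\ref{thm:inf_bound_mod_lambda}.
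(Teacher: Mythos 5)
Your proposal is correct and follows essentially the same route as the paper's proof: the same conditioning on $\mathbf{H}$, the same combination of Lemma~\ref{lem:discrete_continuous_gaussian} (with $\delta = 4\varepsilon$) and Lemma~\ref{lem:var_distance}, the same split into the event $\{\varepsilon \le \tfrac{1}{8e}\}$ and its complement, and the same Jensen/Markov treatment of the two summands. Your constant bookkeeping ($2\delta = 8\varepsilon$, threshold $\tfrac{1}{8e}$, monotonicity of $x\log x$ on $(0,e^{-1}]$) is accurate, and your remark that no perpendicular/parallel decomposition is needed here is also consistent with the paper.
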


\begin{proof}
The proof closely follows the steps of that of Theorem~\ref{thm:inf_bound_mod_lambda}. We start by writing
\begin{align*}
    \info\left[M; (\mathbf{Y} , \mathbf{H})\right] = \E_\mathbf{H} \left[ \info\left[M ; (\mathbf{Y} \vert \mathbf{H}=\mathbf{h})\right] \right].
\end{align*}
For a fixed channel realization $\mathbf{h}$, by Lemma~\ref{lem:discrete_continuous_gaussian} the distribution of the received vector $\mathbf{Y}$ is close to a fixed Gaussian distribution $\tilde{\rho}$ for all messages $M$, with variational distance
\begin{align*}
    V\left(\rho_{\left\{\mathbf{Y}\vert M=m\right\}},\tilde{\rho}\right) \le 4  \varepsilon(\sigma),
\end{align*}
for $\varepsilon \le 1/2$. For the values $\mathbf{h}$ such that $\varepsilon \le \frac{1}{8e}$ we get an information bound using Lemma~\ref{lem:var_distance}. Otherwise, we have the trivial upper bound $\log(|\mathcal{M}|)$, yielding 
\begin{align}
\nonumber
    \info\left[M; (\mathbf{Y} , \mathbf{H})\right] &\le \E_\mathbf{H}\left[\mathbbm{1}_{\left\{\varepsilon \le \frac{1}{8e}\right\}}\left[8 \varepsilon \left( \log(\vert \mathcal{M} \vert) -  \log\left(8 \varepsilon \right)\right)\right] \right] \\
    \label{eq: two-term bound}
    &\quad +\E_{\mathbf{H}}\left[\mathbbm{1}_{\left\{\varepsilon > \frac{1}{8e}\right\}} \log(\vert \mathcal{M} \vert)\right].
\end{align}
The rest of the proof is identical to Theorem~\ref{thm:inf_bound_mod_lambda}, using Jensen's inequality for the first, and Markov's inequality for the second summand to obtain termwise bounds which we can substitute back to \eqref{eq: two-term bound} to conclude the proof.
\end{proof}

\subsection{Observations}
\label{subsec:observations}
The information bounds tend to zero with the respective average flatness factor. By the dual formula \eqref{eqn:ff_theta dual} for the flatness factor, $\varepsilon_{\Lambda}(\sigma)$ decreases monotonously to zero as $\sigma \to \infty$ for any $\Lambda$. As it only depends on ratios of the different parameters $\sigma, \sigma_s, \sigma_h$, it is easy to deduce that the respective average flatness factors also decrease monotonously to zero at poor signal quality. Hence, the bounds prove information-theoretic security for poor eavesdropper's channel quality.

The information bound of Theorem~2 involves the quantity 
\begin{align*}
    E :=\E_{\mathbf{H}}\left[\varepsilon_{\sqrt{ \sigma^2 / \sigma_s^2 I_{n} +  \mathbf{h}^t \mathbf{h} } \Lambda_e}(\sigma)\right].
\end{align*} 
In reasonable scenarios, we have $\sigma^2/\sigma_s^2 \ll 1$, and it is easily deduced that $E \underset{\sigma_s \to \infty} {\to} \E_{\mathbf{H}}\left[\varepsilon_{\sqrt{  \mathbf{h}^t \mathbf{h} } \Lambda_e}(\sigma)\right] = \mathbb{E}_{\mathbf{H}}\left[\varepsilon_{\Lambda_{e,\mathbf{h}}}(\sigma)\right].$ 
Hence, independent of the considered setup, the goal is to design a lattice $\Lambda_e$ so that $\mathbb{E}_{\mathbf{H}}\left[\varepsilon_{\Lambda_{e,\mathbf{h}}}(\sigma)\right]$ is minimized. We remark that this quantity only depends on $\Lambda_e$ and $\sigma_h/ \sigma$, not for example on $\Lambda_b$. Using \eqref{eqn:ff_theta}, we can also compute
\begin{align*}
    \E_{\mathbf{H}}\left[\varepsilon_{\Lambda_{e, \mathbf{h}}} (\sigma)\right] = |\Lambda_b/\Lambda_e| \E_{\mathbf{H}}\left[\nu_{\Lambda_{b, \mathbf{h}}} g_n(\Lambda_{e, \mathbf{h}} ; \sigma)\right] - 1,
\end{align*}
which coincides up to constants with the probability bound \eqref{eqn:p_eve}. This is an important agreement of the probability and information approaches.

We also point out that the agreeing probability bound \eqref{eqn:p_eve} has been analyzed with error terms in \cite{belfiore} and is asymptotic at poor signal quality. We thus expect the average flatness factor to predict and correctly order the performance of different lattices at the interesting low-SNR regime.

\section{Simulation Results}
\label{sec:sims}

In the following, we make use of Proposition~\ref{prop:theta_approx} to approximate the average flatness factor of different lattice coset codes numerically. The aim is to compare the predictions of our theta approximation to lattice design heuristics and channel simulations given \emph{e.g.}, in \cite{gnilke}. We focus on the SISO Rayleigh fading channel to allow for a comparison with the results obtained in \cite{gnilke}, and hence restrict the choice of channel matrices $\mathbf{h}$ to diagonal matrices. 

Even if Eve's information only depends on $\Lambda_e$, to ensure a meaningful comparison between codes we fix a superlattice $\Lambda_b$ (\textit{i.e.}, fix Bob's vector decoding error probability) and the index\footnote{As we are fixing a common superlattice fixing the index is equivalent to comparing sublattices of the same volume.} $|\Lambda_b/\Lambda_e|$ to match the information rate. For convenience, we define the variable $\tau = 1/(2\pi\sigma^2)$, and consider $\varepsilon_{\Lambda}$ as a function of $\tau$. In the limit of large and small values of $\tau$, our quantity of interest of two different lattices of equal index converge to the same value. Hence, we will center our attention to a range of $\tau$ where a difference is visible, which in our plots corresponds to a very large range of values for $\sigma^2$.

\begin{remark}
\label{rmk:diag}
   It has been recently shown in \cite{gnilke} that in order to minimize the expression \eqref{eqn:p_eve} for the SISO channel, the property of the considered lattice being well-rounded is favorable. As will be visible from the following simulation results, this criterion is also key for minimizing the average flatness factor. When comparing \emph{orthogonal} lattices of fixed volume, this reduces to taking a (close to) square lattice, \emph{i.e.}, $d\mathbb{Z}^n$ is expected to perform best among lattices with diagonal generator matrix and fixed volume $\nu = d^n$.

\end{remark}

We start by comparing two 2-dimensional lattices, $\Lambda_1$ and $\Lambda_2$ in Figure~\ref{fig:n2}, with generator matrices 
\begin{align*}
    M_{\Lambda_1} = \begin{bmatrix}
    4 & 0 \\ 0 & 4
    \end{bmatrix}, 
    M_{\Lambda_2} = \begin{bmatrix}
    1 & 0 \\ 0 & 16
    \end{bmatrix},
\end{align*}
which we interpret as index-16 sublattices of $\Lambda = \Z^2$. Some characteristics of these lattices are summarized in Table~\ref{tab:n2}. 

\begin{table}[!h]
\centering
    \begin{tabular}{|c||c|c|c|c|}
    \hline
        $\mathbf{n = 2}$ & $\lambda_{\min}$ & $\#\lbrace \lambda \in \Lambda \mid ||\lambda||^2 = \lambda_{\min}\rbrace$ & WR & Index \\
        \hline \hline
        $\Lambda_1$ & 16 & 4 & Yes & 16 \\
        \hline
        $\Lambda_2$ & 1 & 2 & No & 16 \\
        \hline
    \end{tabular}
    \caption{Characteristics of the lattices in dimension $n = 2$.}
    \label{tab:n2}
\end{table}

\begin{figure}[h!]
\centering
    \includegraphics[width=0.43\textwidth]{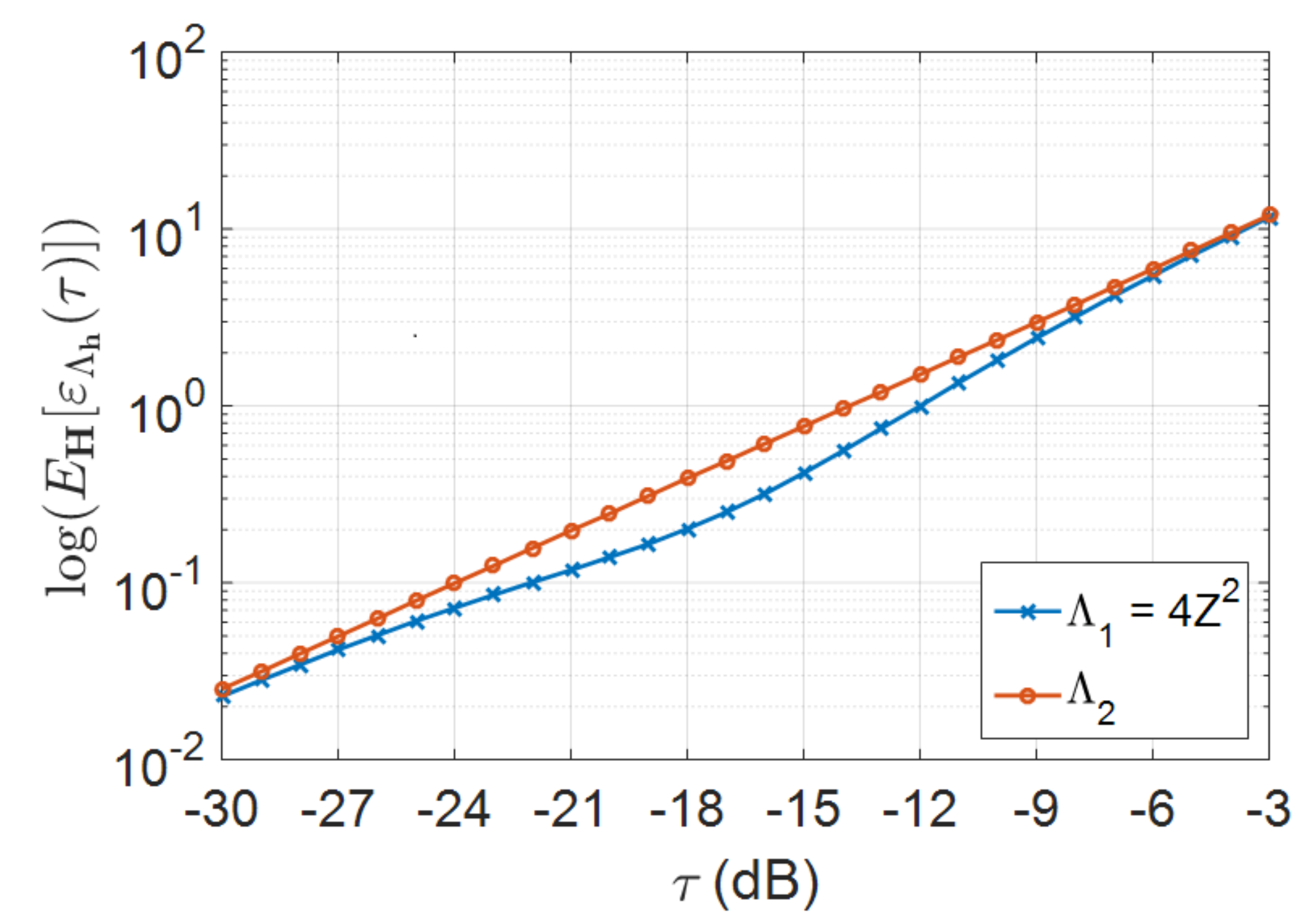}
    \caption{Average flatness factor of $\Lambda_1$ and $\Lambda_2$.}
    \label{fig:n2}
\end{figure}

The choice of $M_{\Lambda_2}$ is deliberately bad, hence from the observation in Remark~\ref{rmk:diag} it is not surprising that $\Lambda_1$ exhibits a lower average flatness factor than $\Lambda_2$. The difference can be as large as 3 dB, which is remarkable without further optimization. This example allows already to hint that both the property of being well-rounded as well as having a shortest vector which is as long as possible is advantageous for minimizing the average flatness factor. This statement will become clearer from the subsequent, more interesting examples. 

\begin{table}[!h]
\centering
    \begin{tabular}{|c||c|c|c|c|}
    \hline
        $\mathbf{n = 4}$ & $\lambda_{\min}$ & $\#\lbrace \mathbf{x} \in \Lambda \mid ||\mathbf{x}||^2 = \lambda_{\min}\rbrace$ & WR & Index \\
        \hline \hline
        $\Pi_1$ & 8 & 24 & Yes & 32 \\
        \hline
        $\Pi_2$ & 1 & 2 & No & 32 \\
        \hline \hline
        $\Omega_1$ & 4 & 4 & No & 256 \\
        \hline
        $\Omega_2$ & 16 & 8 & Yes & 256 \\
        \hline
        $\Omega_3$ & 20 & 12 & Yes & 256 \\
        \hline \hline
        $\Gamma_1$ & 4 & 2 & No & 302 \\
        \hline
        $\Gamma_2$ & 22 & 12 & Yes &  302 \\
        \hline
    \end{tabular}
    \caption{Characteristics of the lattices in dimension $n = 4$.}
    \label{tab:n4}
\end{table}

We consider $\Pi_1 = 2D_4$, a scaled version of the checkerboard lattice, as an index-32 sublattice of $\Z^4$, and compare its average flatness factor to another index-32 sublattice $\Pi_2$ of $\Z^4$. The generator matrices are given by
\begin{equation*}
    \resizebox{0.8\hsize}{!}{$M_{\Pi_1} = 2\cdot \begin{bmatrix} 
        -1 & 1 & 0 & 0 \\ -1 & -1 & 1 & 0 \\ 0 & 0 & -1 & 1 \\ 0 & 0 & 0 & -1
        \end{bmatrix}, 
    M_{\Pi_2} = \begin{bmatrix}
        2 & 0 & -2 & -2 \\ 3 & -5 & -2 & 3 \\ -4 & 8 & 4 & -4 \\ -6 & 5 & 0 & -7
    \end{bmatrix}.$}
\end{equation*}

In a recent article \cite{gnilke}, the authors analyze the performance of three different index-256 sublattices of $\Z^4$ with respect to Eve's decoding error probability. We denote the lattices by $\Omega_1$, $\Omega_2 = 4\Z^4$, and $\Omega_3$, with generator matrices given by
\begin{equation*}
    \resizebox{0.95\hsize}{!}{$M_{\Omega_1} = \begin{bmatrix} 
        16 & 0 & 0 & 0 \\ 0 & 4 & 0 & 0 \\ 0 & 0 & 2 & 0 \\ 0 & 0 & 0 & 2
    \end{bmatrix}, M_{\Omega_2} = \begin{bmatrix}
    4 & 0 & 0 & 0 \\ 0 & 4 & 0 & 0 \\ 0 & 0 & 4 & 0 \\ 0 & 0 & 0 & 4
    \end{bmatrix}, M_{\Omega_3} = \begin{bmatrix}
    -2 & -3 & 4 & -1 \\ 0 & -1 & 0 & 3 \\ 0 & -3 & -2 & -3 \\ -4 & -1 & 0 & -1
    \end{bmatrix}$.}
\end{equation*}
In Figure~\ref{fig:n4} we compare all five presented lattices with respect to their average flatness factor. Note that as remarked above, only a comparison between lattices of the same index is meaningful.
\begin{figure}[h!]
    \centering
    \begin{overpic}[width=0.43\textwidth]{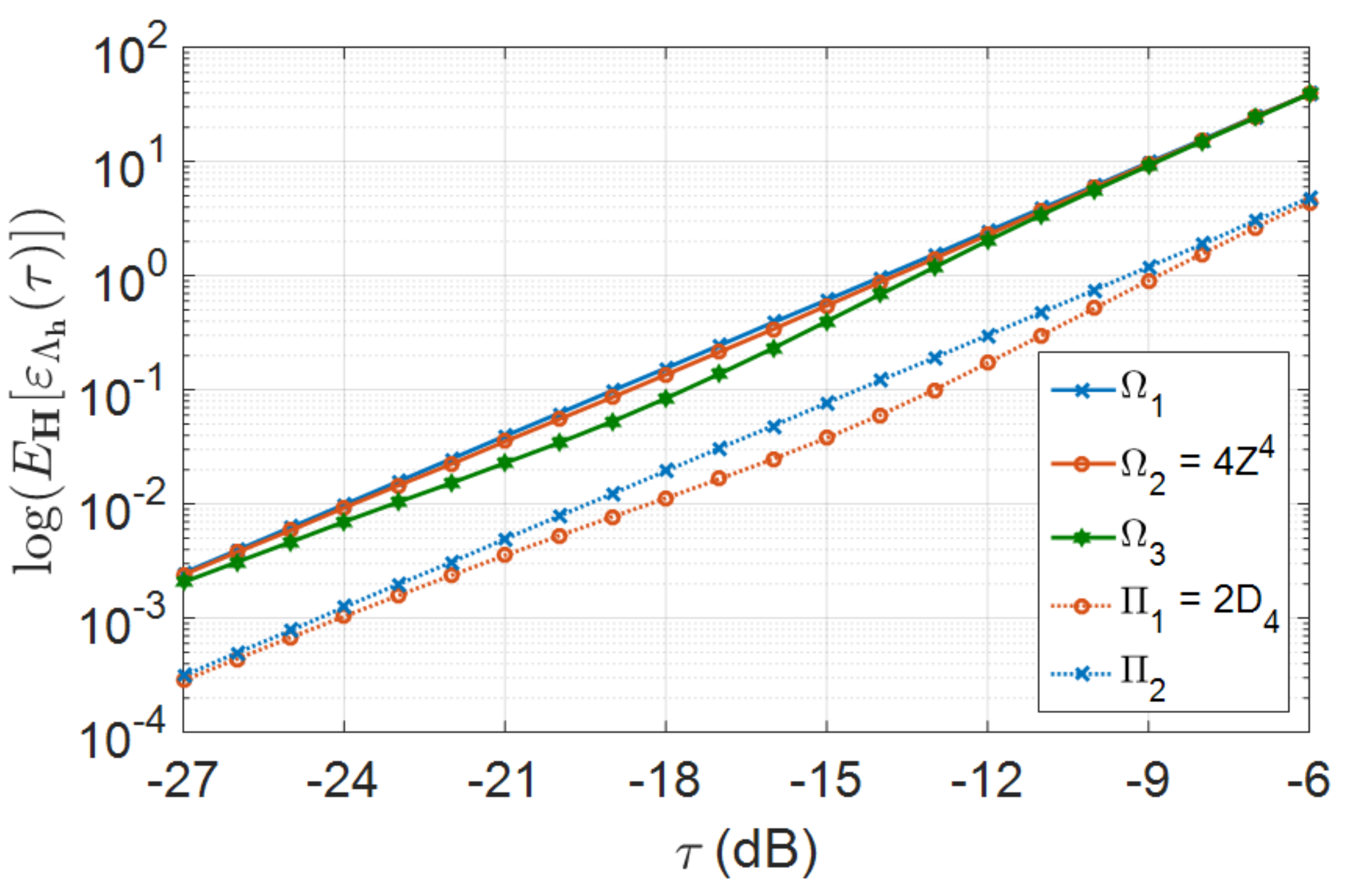}
        \put(15,41){\includegraphics[width=0.15\textwidth]{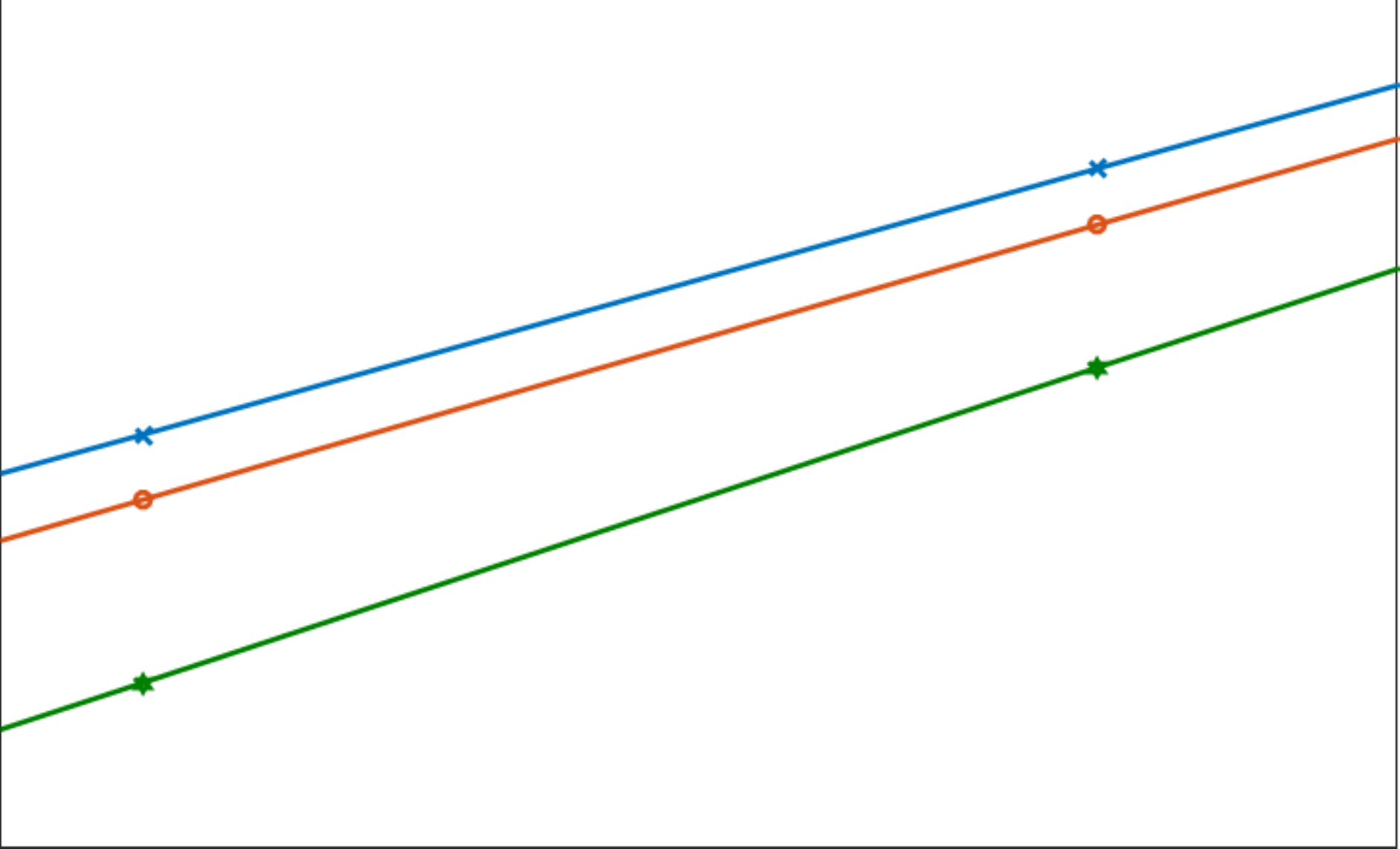}}
    \end{overpic}
    \caption{Average flatness factor of $\Pi_1$, $\Pi_2$, and of $\Omega_1$, $\Omega_2$, $\Omega_3$.}
    \label{fig:n4}
\end{figure}
The comparison between $\Pi_1$ and $\Pi_2$ again agrees with what should be expected. Note that it is known that $D_4$ attains the maximum possible length of the shortest vector among lattices in dimension $4$, and is moreover well-rounded. 

The order of the lattices $\Omega_i$, $1 \le i \le 3$, agrees with the results obtained in \cite{gnilke}. We note that $\Omega_1$ and $\Omega_2$ respect the statement from Remark~\ref{rmk:diag}, but their difference is almost negligible. The well-rounded lattice $\Omega_3$, however, is up to 1-1.5 dB ahead of $\Omega_2$. 

In addition, we compare two index-302 sublattices of $\Z^4$, of respective generator matrices
\begin{equation*}
    \resizebox{0.8\hsize}{!}{$M_{\Gamma_1} = \begin{bmatrix}
        -1 & 1 & 2 & 2 \\ -1 & 0 & 2 & -5 \\ 1 & -2 & 5 & -1 \\ -1 & -5 & 1 & 2
    \end{bmatrix}, 
    M_{\Gamma_2} = \begin{bmatrix}
        1 & 1 & 3 & -2 \\ -4 & 1 & 0 & -4 \\ -1 & -2 & 3 & 1 \\ 2 & -4 & 2 & -1 
    \end{bmatrix}.$}
\end{equation*}
\begin{figure}[h!]
\centering
\includegraphics[width=0.43\textwidth]{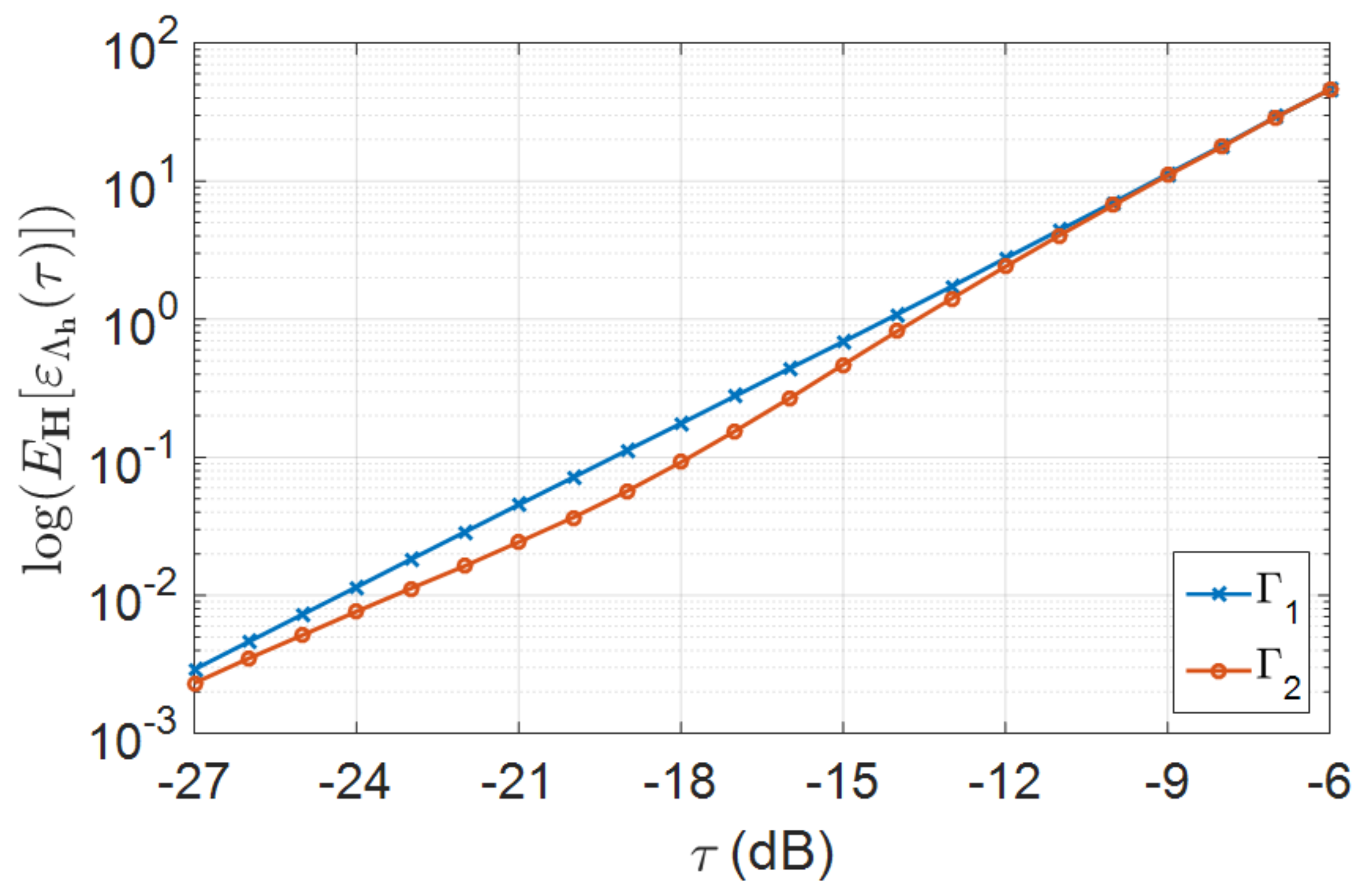}
    \caption{Average flatness factor of $\Gamma_1$ and $\Gamma_2$.}
\end{figure}

As in the previous examples, the well-rounded lattice $\Gamma_2$ exhibits a smaller average flatness factor than $\Gamma_1$, and the difference reaches up to 2 dB. This is in agreement with the simulation results obtained in \cite{gnilke}.

\section{Conclusions}
\label{sec:conclusions}
We studied the design of secure lattice coset codes in a general wireless model with any fading and Gaussian noise, and focused on the $\bmod\ \Lambda_s$ channel as well as on Gaussian coset coding. Recalling the eavesdropper's probability bounds \cite{belfiore, belfiore_mimo} and deriving a variant of the information bounds \cite{mirghasemi, luzzi} we saw that both are an increasing function of the average flatness factor, \emph{i.e.}, the theta series of the faded eavesdropper’s lattice $\Lambda_{e, \mathbf{h}}$ averaged over channel realizations $\mathbf{h}$.

We then computed numerically the average flatness factors of different faded lattices with help of an approximation of the theta series of a lattice derived in \cite{barreal}. By making use of this result, computing the average flatness factor becomes computationally inexpensive, which allows us to avoid further additional approximations based \emph{e.g.}, on the inverse norm sum. As already suggested \cite{gnilke}, our findings show that the property of being well-rounded, is necessary in order to minimize the average flatness factor and, hence, the upper bound on the eavesdropper's information. All our numerical computations agreed with the channel simulations in \cite{gnilke}. 

Interesting further topics of investigation include the more detailed design of optimal secure lattice codes and, for SISO channels, the construction of sequences of codes where the information tends to zero. In both of these cases, well-rounded lattices studied in \cite{gnilke} will likely play a central role.

\section*{Acknowledgment}
This work is supported by the Academy of Finland under Grants \#268364, \#276031, \#282938 and \#283262, as well as a grant from the Finnish Foundation for Technology Promotion.

We gratefully acknowledge the authors of \cite{luzzi} for bringing recent publications to our attention.


\begin{thebibliography}{1}
\bibitem{wyner}
A. D. Wyner, ''The Wire-Tap Channel'', \emph{Bell System Technical Journal,} Vol. 54, Oct 1975.

\bibitem{wyner-ozarow}
L. H. Ozarow and A. D. Wyner, ``The Wire-Tap Channel II'', \textit{Bell System
Technical Journal,} vol. 63, pp. 2135--2157, 1984.

\bibitem{oggier}
F. Oggier, P. Sole and J.-C. Belfiore, ''Lattice Codes for the Wiretap Gaussian Channel: Construction and Analysis'', \emph{IEEE Trans. Inf. Theory}, to appear. 

\bibitem{ling}
C. Ling, L. Luzzi, J.-C. Belfiore, D. Stehle, ''Semantically Secure Lattice Codes for the Gaussian Wiretap Channel'', \textit{IEEE Trans. Inf. Theory}, vol.60, no.10, pp. 6399--6416, 2014.

\bibitem{belfiore}
J.-C. Belfiore and F. Oggier, "Lattice Code Design for the Rayleigh Fading Wiretap Channel", \textit{Proc. IEEE ICC}, 2011.

\bibitem{belfiore_mimo}
J.-C. Belfiore and F. Oggier, ''An Error Probability Approach to
MIMO Wiretap Channels'', \textit{IEEE Trans. Inf. Theory}, vol.61, no.8, pp. 3396--3403, 2013.

\bibitem{mirghasemi}
H. Mirghasemi and J.-C. Belfiore, "Lattice Code Design Criterion For MIMO Wiretap Channels", \textit{Proc. IEEE ITW}, 2015.

\bibitem{luzzi}
L. Luzzi, C. Ling and R. Vehkalahti, "Almost Universal Codes for Fading Wiretap Channels", arXiv:1601.02391, 2016.

\bibitem{gnilke}
O. Gnilke, H. Tran and A. Karrila and C. Hollanti, "Well-Rounded Lattices for Reliability and Security in Rayleigh Fading SISO Channels", arXiv:1605.0041. To appear in Proc. IEEE ITW, 2016.

\bibitem{barreal}
A. Barreal, D. Karpuk and C. Hollanti, ''Decoding in Compute-and-Forward Relaying: Real Lattices and the Flatness of Lattice Sums'', arXiv:1601.05596, 2016.

\end{thebibliography}
\end{document}